\documentclass[conference,10pt]{IEEEtran}
\usepackage{amsmath}
\usepackage[dvips]{graphicx}
\usepackage{amsmath}
\usepackage{amssymb}
\usepackage{epsfig}
\usepackage{graphicx}
\usepackage{dsfont}
\usepackage{float}
\usepackage{xfrac}
\usepackage{color}
\usepackage{psfrag}
\usepackage{amsthm}
\usepackage{subfigure}
\usepackage[lined,boxed,commentsnumbered]{algorithm2e}
\usepackage{enumerate}

\DeclareMathAlphabet{\bm}{OML}{cmr}{bx}{it}
\DeclareMathAlphabet{\mathsf}{OT1}{cmss}{m}{n}
\DeclareMathAlphabet{\bs}{OT1}{cmss}{bx}{it}



\newcommand{\bb}{\mathbb}
\newcommand{\rs}{\mathrm}
\newcommand{\mc}{\mathcal}

\newtheorem{lemma}{Lemma}
\newtheorem{proposition}{Proposition}

\newtheorem{remark}{Remark}
\newtheorem{definition}{Definition}

\newtheorem{fact}{Fact}


\newcommand{\insl}[1]{{#1}}


\begin{document}
\title{A Simple Algorithm for Approximation by Nomographic Functions}

\author{
\IEEEauthorblockN{Steffen Limmer\IEEEauthorrefmark{1},
		Jafar Mohammadi\IEEEauthorrefmark{1} and
		S\l awomir  Sta\'nczak\IEEEauthorrefmark{1}\IEEEauthorrefmark{2}
		}
		\IEEEauthorblockA{
	\IEEEauthorrefmark{1}
		Fraunhofer Institute for Telecommunications, Heinrich Hertz Institute,
		Einsteinufer 37,  10587 Berlin, Germany.\\
    \IEEEauthorrefmark{2}
    Fachgebiet Informationstheorie und theoretische Informationstechnik,\\
    Technische Universit\"at Berlin, Einsteinufer 25, 10587 Berlin, Germany.\\		
    }
}

%
\maketitle
\begin{abstract}
This paper introduces a novel algorithmic solution for the
approximation of a given multivariate function by a nomographic
function that is composed of a one-dimensional continuous and
monotone outer function and a sum of univariate continuous inner
functions. We show that a suitable approximation can be obtained
by solving a cone-constrained Rayleigh-Quotient optimization
problem. The proposed approach is based on a combination of a
dimensionwise function decomposition known as Analysis of Variance
(ANOVA) and optimization over a class of monotone polynomials. An example is given to
show that the proposed algorithm can be applied to solve
problems in distributed function computation over multiple-access
channels.
\end{abstract}
\begin{IEEEkeywords}
  Distributed computation, nomographic approximation,
  compute-and-forward, multiple-access channel
\end{IEEEkeywords}

\section{Introduction}
\label{sec:intro}
Distributed function computation is not a new idea, however,
there has been recently an emerging interest in communication
strategies that use the wireless channel for function computation over 
multiple-access channels \cite{NaGa07,GoSt13}. Such strategies 
have the potential for huge performance gains expressed in terms of efficiency, 
complexity and
signaling overhead. The approach of
\cite{GoSt13} exploits the superposition property of the wireless
channel for computation of some nomographic functions and is of high
practical relevance because it is robust under practical impairments
such as the lack of synchronization; in addition, there is little need
for coordination between different sensors. The recent work
\cite{KoGoSt14} demonstrates a hardware implementation of the method
presented in \cite{GoSt13}. A key ingredient thereby is that the function to
be computed has a \textit{suitable} nomographic representation that is used to
match the process of function computation to the communication channel
\cite{NaGa07,GoBoSt13}. However, in \cite{NaGa07,GoSt13,GoBoSt13} an algorithmic method to obtain suitable nomographic 
representations of given functions is missing.  

The theoretical analysis of functions that can
be written in a nomographic form has a long history, which dates back
to Kolmogorov \cite{Ko57}, Sprecher \cite{Sp65,Sp14} and Buck
\cite{Bu82}. The authors in \cite{Sp65} and \cite{Sp14} showed that every
function has a nomographic representation, when the outer function
can be discontinuous (e.g. a space-filling curve). Implementing such
functions in digital signal processing systems, especially when based
on space-filling curves as analysed in \cite{Sp14}, is prohibitive and
leads to designs that are notoriously susceptible to noise.

Since many known nomographic representations of functions used in wireless sensor 
network (WSN) applications are too intricate for practical use, we argue in favor 
of considering nomographic representations that approximate the functions
of interest and are easy to implement in distributed
networks.\footnote{Notice that \emph{nomographic approximation} is
  used in this paper to refer to a nomographic representation of some
  function that approximates the function of interest in some
  pre-defined sense.}
In this work, we propose a simple method for nomographic approximation
that can be used to approximate functions dictated by given
applications in a decentralized manner. In a nutshell, the proposed
method can be used to approximate certain multivariate functions by
means of some nomographic approximations that are composed of a
monotone continuous outer function and continuous inner
functions. When implemented in digital signal processing systems such inner and 
outer functions are more robust to noise and easier to compute with finite 
precision arithmetic. The proposed approach is an algorithmic way to approximate more functions to be computed via the methods in  \cite{NaGa07,GoSt13}.

\subsection{Notation}
\label{sec:notation}
Scalars, vectors, matrices and sets are denoted by lowercase $a$, bold
lowercase $\bm{a}$, bold uppercase $\bm{A}$ and calligraphic letters
$\mc{A}$, respectively. $(\cdot)^T$, $f \circ g$ and $f^{-1}$ stand
for transpose, function composition and function inverse. The sets of
natural number, real numbers and $D\times D$ real symmetric matrices are denoted by $\bb{N}$, $\bb{R}$ and $\bb{S}^{D\times D}$, while
$\bold{0}$ is used to denote the vector of all zeros, where the size
will be clear from the context. We use $\mc{L}_p( \mc{X}^K )$ to
refer to the space of $p$-integrable ($1\leq p < \infty$) real
functions, $\mc{P}_{K,D} ( \mc{X}^K )$ to the space of square
integrable real polynomials of $K$ variables of degree at most $D$ in
each variable, $\mc{C}(\mc{X}^K)$ to the space of continuous
functions, and $\mc{N}\left(\mc{X}^K\right)$ to the space of
nomographic functions, i.e. functions that can be represented in the
form \insl{$\psi( \sum_{k=1}^K \varphi_k(x_k) )$}, and all spaces defined on
$\mc{X}^K := \mc{X} \times \hdots \times \mc{X} \subseteq \bb{R}^K$, respectively. If in addition the
\emph{outer and inner functions} $\psi$ and $\varphi_k$ fulfill
$\varphi_k \in \mc{C}(\mc{X}): \mc{X} \to \Omega_k \subseteq \bb{R} \ \forall k$ and
$\psi \in \mc{C}(\Omega^\prime): \Omega^\prime \to \Omega $, we denote the corresponding space by
$\mc{N}_\mc{C}(\mc{X}^K)$ \cite{GoBoSt13}.

\section{System Model, Problem Statement and Theoretical Framework}
\label{sec:model}

We consider a network consisting of $K \in \bb{N}$ sensors indexed by
the set $\mc{K}:=\{1,\hdots,K\}$. The sensors observe measurements
$\bm{x}:=[x_1,\hdots,x_K]^T \in \mc{X}^K$ and the task of the network
is to compute or approximate a multivariate function
\begin{align}
\label{eq:multivariate_f}
f: \mc{X}^K \to \Omega \subseteq \bb{R}
\end{align}
at some pre-selected fusion node.\footnote{Due to the structure of the
  nomographic representation, every ordinary sensor can act as fusion
  node.}  The underlying computation and communication scenario is
illustrated in Fig. \ref{fig:innet_comp}. 

\subsection{Problem statement}
\label{sec:problem}

It was shown in \cite{GoSt13,GoBoSt13} that a nomographic
represenation of some given function admits an efficient
reconstruction or estimation of this function over the wireless
channel. Although
every function has a nomographic representation of the form
$\psi( \sum_{k=1}^K \varphi_k(x_k) )$, the general construction of the
inner functions and the outer function is not amenable to
implementation on state-of-the-art hardware technologies. Therefore,
given some function $f$ defined by \eqref{eq:multivariate_f}, the
problem is to find 
a suitable \emph{nomographic representation} such that (in some sense)
\begin{align}\label{equ:nom_approx}
f(\bm{x}) \approx \psi \Bigl( \sum\nolimits_{k=1}^K \varphi_k(x_k) \Bigr) \quad \forall \bm{x} \in \mc{X}^K\,.
\end{align}
In doing so, we assume the following:
\begin{enumerate}[{A}.1]
\item $\psi$ is monotone continuous,
\item $\{\varphi_k\}_{k \in \mc{K}}$ are continuous,
\end{enumerate}
which is based on practical considerations concerning noise robustness
and implementability on digital signal processing systems.
\begin{figure}
\subfigure{
 \psfrag{x1}[bc][bc]{${x}_1$}
 \psfrag{x2}[bc][bc]{${x}_2$}
 \psfrag{xK}[bc][bc]{${x}_K$}
 \psfrag{f}{$f(\bm{x})$}
  \includegraphics[width= 0.4\linewidth]{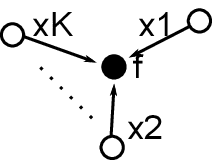}
 } \quad
\subfigure{
 \psfrag{x1}[bc][bc]{$\varphi_1(x_1)$}
 \psfrag{x2}[bc][bc]{$\varphi_2(x_2)$}
 \psfrag{xK}[bc][bc]{$\varphi_K(x_K)$}
 \psfrag{f}{$\psi\left(\sum \varphi_k(x_k)\right)$}
 \psfrag{g}{$\approx f(\bm{x})$}
  \includegraphics[width= 0.4\linewidth]{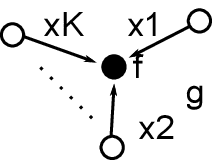}
 } 
\caption{Separation and superposition based function compuation in networks.}
\label{fig:innet_comp}
\end{figure}
To emphasize the importance of Assumptions A.1 and A.2 on the space of representable functions, let us
review some previously known results from literature.
\begin{fact}{Representation by nomographic functions}$\quad$ 
\label{rem:nom_representation}
\begin{enumerate} 
\item Let $\varphi_k$ be \emph{monotone increasing}, $\psi$ be
  \emph{possibly discontinuous}, then we have \cite{Sp65,Sp14}
\begin{align}
f(\bm{x}) = \psi \Bigl( \sum\nolimits_{k=1}^K \varphi_k(x_k) \Bigr)\,,\forall f \in \mc{C}(\mc{X}^K)\,.
\end{align}
\item Let $\varphi_k \in \mc{C}(\bb{R})$, $\psi \in \mc{C}(\bb{R})$,
  then the following holds:
	\begin{itemize}
	\item $\mc{N}_\mc{C}(\mc{X}^K)$ is a nowhere dense subset of
          $\mc{C}(\mc{X}^K)$ \cite{Bu82}
	\item
          $f(\bm{x}) = \sum\nolimits_{i=1}^{2K+1} \psi_i \Bigl( \sum_{k=1}^K
            \varphi_{k}^{(i)}(x_k) \Bigr)$,
          i.e. every function can be written as a sum of at most
          $2K+1$ nomographic functions \cite{Ko57}.
	\end{itemize}
\end{enumerate}
\end{fact}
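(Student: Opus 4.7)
The statement bundles three classical theorems (Sprecher, Buck, Kolmogorov), so my plan is to treat each part with its own natural strategy; the common thread is that a sufficiently rich family of inner functions turns the aggregator $s(\bm{x})=\sum_{k=1}^K\varphi_k(x_k)$ into an essentially injective map, after which the outer function simply records the values of $f$ along the fibers of $s$.

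For item 1 (Sprecher-type representation), I would construct strictly increasing continuous $\varphi_k$ making $s$ injective on $\mc{X}^K$. A standard device is to fix an increasing homeomorphism $\phi$ on $\mc{X}$ and scalars $\lambda_k>0$ that are linearly independent over $\bb{Q}$, then set $\varphi_k(x) := \lambda_k \phi(x)$; injectivity of $s$ follows from rational independence applied to $\phi(\mc{X})$. One then defines $\psi := f \circ s^{-1}$ on $s(\mc{X}^K)$ and extends it arbitrarily to $\bb{R}$. Continuity of $\psi$ cannot be enforced because $s(\mc{X}^K)$ is a topologically wild (Cantor-like) subset of the line and $s^{-1}$ is therefore discontinuous---this is precisely the price of insisting on a single outer function.

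For the first clause of item 2 (Buck), I would use a Baire-category argument. The key observation is that any $f = \psi\circ s \in \mc{N}_\mc{C}(\mc{X}^K)$ must be constant on every fiber $s^{-1}(t)$. Given $g\in\mc{C}(\mc{X}^K)$ and $\varepsilon>0$, one shows that for any continuous aggregator there exist distinct points $\bm{x},\bm{y}\in\mc{X}^K$ forced to share a function value; a small continuous bump at $\bm{x}$ of amplitude less than $\varepsilon$ destroys that equality and thereby pushes $g$ off every continuous nomographic function in its $\varepsilon$-neighborhood. Carried out uniformly, this proves that the complement of the closure of $\mc{N}_\mc{C}(\mc{X}^K)$ is dense, which is the claim. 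The subtle point is controlling the perturbation argument over the uncountable parameter space of admissible $(\psi,\varphi_k)$, which is why the literature usually phrases the obstruction through a fiber-invariant rather than through explicit parametrizations.

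The second clause of item 2 is Kolmogorov's superposition theorem and is by far the main obstacle. The plan, following Lorentz's reformulation, is to construct \emph{universal} continuous monotone inner functions $\varphi_k^{(i)}$, $i=1,\dots,2K+1$, independent of $f$, whose aggregators $s_i(\bm{x}) = \sum_{k=1}^K \varphi_k^{(i)}(x_k)$ satisfy a \emph{separation property}: for every $\bm{x}\in\mc{X}^K$, at least $K+1$ of the $s_i$ map a small cube around $\bm{x}$ into pairwise disjoint intervals. Given such inner functions, for each $i$ one chooses a continuous $\psi_i$ that absorbs a definite fraction of the sup-norm residual $\bigl\|f - \sum_{j<i}\psi_j\circ s_j\bigr\|_\infty$; a geometric-series iteration then yields the claimed exact representation. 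The heart of the argument---and the step I would import from the literature rather than reprove---is the combinatorial-topological construction of $2K+1$ shifted grids of nested subintervals of $[0,1]$ whose interaction yields the separation property, with the constant $2K+1$ itself arising from the pigeonhole count needed to guarantee $K+1$ "good" coordinates at every input.
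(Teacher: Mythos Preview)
The paper does not prove Fact~1 at all: it is stated as a compilation of classical results with citations to Sprecher, Buck, and Kolmogorov, and the narrative immediately moves on. So there is no ``paper's own proof'' to compare against; you have gone well beyond what the authors attempt.

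That said, your sketch for item~1 contains a genuine error worth flagging. You propose $\varphi_k(x)=\lambda_k\phi(x)$ with $\phi$ an increasing homeomorphism on $\mc{X}$ and the $\lambda_k$ rationally independent, and then claim that $s(\bm{x})=\sum_k\lambda_k\phi(x_k)$ is injective on $\mc{X}^K$. This cannot hold: $s$ is a continuous map from a $K$-dimensional cube to $\bb{R}$, and by invariance of domain no such map is injective for $K\geq 2$. Equivalently, the level sets $s^{-1}(t)$ are $(K-1)$-dimensional hypersurfaces, not points, regardless of how the $\lambda_k$ are chosen. Rational independence of the $\lambda_k$ would give injectivity only if the $\phi(x_k)$ ranged over a set like $\bb{Z}$ or $\bb{Q}$, not over a full interval. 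Consequently your description of $s(\mc{X}^K)$ as ``Cantor-like'' is also off: it is a closed interval. Sprecher's actual construction builds $\varphi$ via base-$\gamma$ digit expansions so that the weighted sum interleaves the digits of the $x_k$ into non-overlapping positions; the recovery of $\bm{x}$ from $s(\bm{x})$ is then possible on a dense set but is inherently discontinuous, which is where the discontinuity of $\psi$ enters. Your sketches for Buck and Kolmogorov are broadly in line with the standard arguments, with the usual caveat that the Kolmogorov separation construction is the real work and you correctly defer it to the literature.
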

Due to Assumptions A.1-2, we consider the second case and impose an
additional constraint of a single function $\psi$. Note that the implication of Fact 1.2) is that only a sparse subset of functions $f\in \mc{C}(\mc{X}^K)$ can be approximated by some $\hat{f}\in \mc{N}_\mc{C}(\mc{X}^K)$ with arbitrary high precision. The derived framework provides a necessary condition for suitable $\hat{f}$ to exist as well as the corresponding inner and outer functions.

\subsection{Theoretical framework: Analysis of variance}
\label{sec:anova}

To establish our results, we resort to a general framework for a
\emph{dimensionwise decomposition} of a function $f(\bm{x})$ into a
sum of \emph{lower-dimensional} terms. More precisely, we consider the
\emph{Analysis of Variance} (ANOVA) framework
\cite{NoWo08,KuSlWaWo10}, which has also been considered in the
context of many other applications, ranging from chemistry and finance
to statistics (see e.g. \cite{Gr05}). The goal of this framework is to
decompose a function $f \in \mc{L}_2(\mc{X}^K)$ into a sum of $2^K$
functions $f_\mc{S}$ that are mutually orthogonal w.r.t. the inner
product
$\langle f,g\rangle = \int_{\mc{X}^K} f(\bm{x}) \cdot g(\bm{x}) \
d\bm{x}$.
Here, the function $f$ is decomposed into a sum of lower dimensional
functions
\begin{align}\label{equ:dec1}
f(\bm{x}) = \sum\nolimits_{\mc{S} \subseteq \mc{K} } f_\mc{S}(\bm{x}_\mc{S}),
\end{align}
where each function $f_\mc{S}$ only depends on a subset of variables
indexed by the set $\mc{S}\subseteq \mc{K}$ and the sum ranges over
the power set of $\mc{K}$.  The algorithm to obtain the ANOVA
decomposition for a given function $f\in \mc{L}_2(\mc{X}^K)$ is given
by Alg. \ref{alg:anova}.
\begin{algorithm}[h]\label{alg:anova}
\SetKwInput{KwData}{\textbf{Input}}
\SetKwInput{KwResult}{\textbf{Output}}
\KwData{$f\in \mc{L}_2(\mc{X}^K)$}
\KwResult{functions $\{f_\mc{S}\}_{\mc{S}\subseteq \mc{K}}$, variances $\sigma^2$, $\{\sigma_\mc{S}\}^2_{\mc{S}\subseteq \mc{K}}$}
$f_\emptyset := \int_{\mc{X}^K} f(\bm{x}) \ d\bm{x}$; $\sigma_\emptyset := 0$\;
\For{$\mc{S}\subseteq \mc{K}$, $\mc{S}\neq \emptyset$)}{
$f_\mc{S}(\bm{x}_\mc{S}): = \int_{\mc{X}^{K - \lvert \mc{S} \rvert}} f(\bm{x}) \ d \bm{x}_{ \mc{K} \backslash \mc{S}} - \sum_{\mc{U} \subsetneq \mc{S} } f_\mc{U} (\bm{x}_\mc{U})$\;
$\sigma_\mc{S}^2 := \int_{\mc{X}^{\lvert \mc{S} \rvert}}  f_\mc{S}^2 (\bm{x}_\mc{S}) \ d \bm{x}_\mc{S}$\;
}
$\sigma^2 := \int_{\mc{X}^K} f^2(\bm{x}) \ d\bm{x} - \left( \int_{\mc{X}^K} f(\bm{x}) \ d\bm{x} \right)^2 \equiv \sum_{\mc{S} \subseteq\mc{K}} \sigma_\mc{S}^2$\;
\caption{ANOVA decomposition of $f$ \cite{KuSlWaWo10}.}
\end{algorithm}
\vspace{-0.5cm}
\begin{remark}\label{rem:anova_compuation}
  Despite its simple form, the reader should note that a numerical
  implementation of Alg. \ref{alg:anova} is in general not trivial. In
  fact, the computation of all $2^K$ terms for a full decomposition
  becomes impracticable for moderate values of $K$ and the involved
  high-dimensional integrals need to exist and be well-defined. In
  addition to these requirements, the integrals might still be hard to
  obtain in analytical form and numerical approximation methods might
  be necessary. However, for some classes of functions including
  multivariate polynomials $f \in \mc{P}_{K,D} (\mc{X}^K)$, we can
  easily obtain a truncated decomposition in closed form up to
  moderate values of $K$.
\end{remark}
The conditions for which a truncated decomposition provides a good
approximation of the original function are made precise in the
following definition.

\begin{definition}\cite{KuSlWaWo10}\label{def:eff_dimension}
  A function $\insl{f}$ is said to be of \emph{order $d$} if
\begin{align}
\label{equ:function_order_def}
f(\insl{\bm{x}}) = \sum\nolimits_{\lvert \mc{S} \rvert \leq d} f_\mc{S}(\bm{x}_\mc{S}) \Leftrightarrow \sum\nolimits_{ \lvert \mc{S} \rvert \leq d} \sigma_\mc{S}^2 = \sigma^2
\end{align}
and of \emph{effective superposition dimension} $d$ if\insl{, for some
given sufficiently small $\varepsilon>0$, there holds}
\begin{equation}
\label{equ:effective_superposition}
\sum\nolimits_{ \lvert \mc{S} \rvert \leq d} \sigma_\mc{S}^2 \geq (1 - \varepsilon) \sigma^2,
\end{equation}
where $f_\mc{S}(\bm{x}_\mc{S})$, $\sigma_\mc{S}^2$ and $\sigma^2$ are obtained by Alg. \ref{alg:anova}. 
\end{definition}
In particular, if a function $f$ is of order one, there are no interactions between variables $\{x_k\}_{k\in\mc{K}}$ and there exists a parametrization of the form $f(\bm{x})=\sum\nolimits_{\lvert \mc{S} \rvert \leq 1}f_\mc{S}(x_\mc{S}) \equiv \sum\nolimits_{k=1}^{K} f_k(x_k)+f_\emptyset$ (resp. small interaction and approximate parametrization for effective superposition one\footnote{For convenience, subsequent designations on the approximate case are only made when a precise distinction is necessary.}). This parametrization is well-known in the literature as \emph{Generalized additive models} \cite{HaTi90,Ba14} and it forms the basis for the algorithm developed in the following.


\section{Outline of the proposed approach}
To obtain a parametrization of type \eqref{equ:nom_approx} the idea is to "skew" the function $f$ with a bijection $g\in \mc{C}: \Omega\to\Omega^\prime$, such that the resulting function $\varphi(\bm{x}):=(g \circ f)(\bm{x}): \mc{X} \to \Omega^\prime$ is of order one according to Def. \ref{def:eff_dimension}. Here, the required bijectiveness of $g$ ensures that a unique functional inverse $\psi:=g^{-1} \in\mc{C}$ exists and the resulting parametrization in nomographic form can be given by
\begin{align}
\label{equ:nomographic_anova}
f(\bm{x}) =( g^{-1} \circ g \circ f )(\bm{x}) = \psi \Bigl( \sum\nolimits_{\lvert \mc{S} \rvert \leq d} \varphi_\mc{S}(x_\mc{S}) \Bigr)
\,.
\end{align}
Accordingly, given some approximation constant
$\epsilon>0$ in Def. \ref{def:eff_dimension}, a \emph{nomographic approximation} is obtained if
$\varphi(\bm{x})$ is of \emph{effective superposition dimension}
$1$.
These statements are summarized in the following Lemma.
\begin{lemma}
  Let $f: \mc{X}^K \to \Omega \in \mc{C}(\mc{X}^K)$ and
  $g: \Omega \to \Omega^\prime \in \mc{C}(\Omega)$ be a bijection with inverse $\psi:=g^{-1}$. If
  $\varphi(\bm{x}):=(g \circ f)(\bm{x}) \in \mc{L}_2(\mc{X}^K)$ is of order $1$, we
  obtain a \emph{nomographic representation}
  $f = \psi( \sum\nolimits_{\lvert \mc{S} \rvert \leq
  1} \varphi_\mc{S} (x_\mc{S}) )$ by the
  ANOVA decomposition of $\varphi(\bm{x})$ and the identity
  $(\psi \circ g \circ f)(\bm{x}):= (\psi \circ
  \varphi)(\bm{x})= \psi( \sum_{\lvert \mc{S} \rvert \leq 1}
  \varphi_\mc{S}(\bm{x}_\mc{S}))$.
  Similarly, if $(g \circ f)(\bm{x})$ is of effective
  superposition dimension $1$ \insl{(given $\varepsilon>0$)}, we obtain a
  \emph{nomographic approximation}, where the approximation is optimal
  in an $\mc{L}_2$ sense (or, equivalently, the minimum variance
  sense) of the inner approximation problem \cite{Ra13}.
\end{lemma}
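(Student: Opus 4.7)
The plan is to verify the two claims by directly chaining the assumptions with the ANOVA machinery of Algorithm~\ref{alg:anova} and Definition~\ref{def:eff_dimension}; the lemma is essentially a bookkeeping result that packages the ``skewing'' idea together with the order-one decomposition, so no deep new ingredient should be needed.

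For the exact representation, I would first observe that, since $g$ is a continuous bijection on $\Omega$, its inverse $\psi=g^{-1}$ exists and is continuous, so both $\psi$ and $\varphi := g\circ f$ are well-defined. Applying Algorithm~\ref{alg:anova} to $\varphi \in \mc{L}_2(\mc{X}^K)$ yields the full decomposition $\varphi(\bm{x}) = \sum_{\mc{S}\subseteq\mc{K}} \varphi_\mc{S}(\bm{x}_\mc{S})$. Invoking the hypothesis that $\varphi$ is of order one and using the equivalence in \eqref{equ:function_order_def}, this collapses to $\varphi(\bm{x}) = \varphi_\emptyset + \sum_{k=1}^K \varphi_{\{k\}}(x_k)$. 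Composing both sides with $\psi$ from the left and using $\psi\circ g = \operatorname{id}_\Omega$ gives $f(\bm{x}) = \psi\bigl(\sum_{|\mc{S}|\leq 1}\varphi_\mc{S}(\bm{x}_\mc{S})\bigr)$, which is the asserted nomographic representation with continuous outer and inner functions.

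For the approximation case, I would argue analogously. If $\varphi$ is of effective superposition dimension one for the chosen $\varepsilon>0$, Definition~\ref{def:eff_dimension} provides $\sum_{|\mc{S}|\leq 1}\sigma_\mc{S}^2 \geq (1-\varepsilon)\sigma^2$. Using the mutual orthogonality of the ANOVA terms with respect to the $\mc{L}_2$ inner product (which Algorithm~\ref{alg:anova} guarantees), Parseval yields
\begin{equation*}
\Bigl\|\varphi - \sum\nolimits_{|\mc{S}|\leq 1}\varphi_\mc{S}\Bigr\|_2^2 \;=\; \sum\nolimits_{|\mc{S}|>1}\sigma_\mc{S}^2 \;\leq\; \varepsilon\,\sigma^2 .
\end{equation*}
Applying $\psi$ to the truncated sum then delivers the nomographic approximation $f(\bm{x}) \approx \psi\bigl(\sum_{|\mc{S}|\leq 1}\varphi_\mc{S}(\bm{x}_\mc{S})\bigr)$ in the sense of \eqref{equ:nom_approx}.

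The remaining and, I expect, most delicate point is the claim of \emph{optimality} of this approximation in an $\mc{L}_2$ (equivalently, minimum-variance) sense for the inner approximation problem. Here I would appeal to the standard result on generalized additive models cited as \cite{Ra13}: the orthogonal projection of any $\mc{L}_2(\mc{X}^K)$ function onto the closed subspace of additive functions $\{\varphi_\emptyset + \sum_k h_k(x_k)\}$ is given precisely by its first-order ANOVA truncation, a consequence of the orthogonality of the $\varphi_\mc{S}$ across strata of different $|\mc{S}|$. Thus, restricting attention to the inner approximation of $\varphi = g\circ f$ by a sum of univariate functions, the order-one ANOVA truncation is the best possible choice, and the lemma's optimality assertion follows by interpreting the nomographic approximation of $f$ through the fixed bijection $\psi$ as inheriting this inner-problem optimality.
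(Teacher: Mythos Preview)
Your proposal is correct and follows exactly the reasoning the paper intends; note that the paper does not give a separate proof for this lemma, as the argument is already embedded in the statement itself (the identity $(\psi\circ g\circ f)(\bm{x}) = \psi(\sum_{|\mc{S}|\leq 1}\varphi_\mc{S}(\bm{x}_\mc{S}))$ together with the reference to \cite{Ra13} for the $\mc{L}_2$ optimality). Your expanded write-up, including the Parseval step and the explicit projection argument for the optimality claim, simply spells out what the paper leaves implicit.
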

\begin{remark}\label{rem:approx} 
The reader should note however, that due to the usually nonlinear transformation by the outer function $\psi$, this $\mc{L}_2$ optimality does not need to hold for the overall approximation error. Note also that bounding the resulting supremum norm of the approximation error $ \sup_{\bm{x}\in\mc{X}^K} \lvert f - \sum\nolimits_{\lvert \mc{S} \rvert\leq d} \varphi_\mc{S}(x_\mc{S}) \rvert$ is an interesting open problem but out of scope of this implementation-oriented paper. We highlight, that the required analysis seems to be quite challenging but might be similar in spirit to the simpler case of approximation with ridge functions (see e.g. \cite{KoVy14}). 
\end{remark}
\subsection{A class of monotone polynomials}
To obtain a computationally tractable set of continuous bijections $g$, we
\insl{consider a class of polynomials} known as \emph{Bernstein polynomials}:
\begin{lemma}\cite{CaSh66}
\label{lem:bernstein_bounds}
Let $g(\xi) \in \mc{P}_{1,D-1} := \sum_{d=0}^{D-1} {z}_{d+1} \xi^d$, be a real polynomial of degree $D-1$ defined on $[0,1]$ with real coefficients ${\bm{z}}:=[{z}_1,\hdots,{z}_{D}]$ and $\tilde{\bm{M}} \in \bb{R}^{D \times D}$ be a lower triangular matrix with entries given by 
$[ \tilde{\bm{M}} ]_{i,j}={i-1 \choose j-1}{D-1 \choose j-1}^{-1} \ \forall i \geq j$, $[\tilde{\bm{M}}]_{i,j}=0 \ \forall i < j$.
Then it holds that 
\begin{align}
\underset{i}{\min} \ [\tilde{\bm{M}}\bm{z}]_i \leq g(\xi) \leq \underset{i}{\max} \ [\tilde{\bm{M}}\bm{z}]_i.
\end{align}
\end{lemma}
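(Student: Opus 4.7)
The plan is to recognize the statement as the classical convex-hull property of Bernstein coefficients and to verify that $\tilde{\bm{M}}$ is precisely the monomial-to-Bernstein change-of-basis matrix. Introduce the Bernstein basis of degree $D-1$ on $[0,1]$,
\begin{equation*}
B_{i,D-1}(\xi) := \binom{D-1}{i}\xi^i(1-\xi)^{D-1-i},\quad i=0,\dots,D-1,
\end{equation*}
and recall two immediate facts: $B_{i,D-1}(\xi)\geq 0$ for $\xi\in[0,1]$, and $\sum_{i=0}^{D-1} B_{i,D-1}(\xi) = (\xi+(1-\xi))^{D-1} = 1$. Once I rewrite $g(\xi) = \sum_{i=0}^{D-1} c_i\, B_{i,D-1}(\xi)$, the values $\{c_i\}$ are attained on $[0,1]$ as a convex combination, so $\min_i c_i \leq g(\xi)\leq \max_i c_i$. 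The whole lemma will then reduce to identifying $c_{i-1} = [\tilde{\bm{M}}\bm{z}]_i$.

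To identify the coefficients, I would derive the standard conversion formula
\begin{equation*}
\xi^d = \sum_{i=d}^{D-1}\frac{\binom{i}{d}}{\binom{D-1}{d}} B_{i,D-1}(\xi),\qquad d=0,\dots,D-1,
\end{equation*}
which follows by applying the identity $\binom{D-1}{i}\binom{i}{d} = \binom{D-1}{d}\binom{D-1-d}{i-d}$ and then summing out $i-d$ via the binomial theorem on $(\xi+(1-\xi))^{D-1-d}$. Substituting this expansion into $g(\xi)=\sum_{d=0}^{D-1}z_{d+1}\xi^d$ and swapping the order of summation gives
\begin{equation*}
g(\xi) = \sum_{i=0}^{D-1}\Bigl(\sum_{d=0}^{i}\frac{\binom{i}{d}}{\binom{D-1}{d}} z_{d+1}\Bigr) B_{i,D-1}(\xi).
\end{equation*}
A routine index shift $j=d+1$ then matches the inner sum entrywise to $[\tilde{\bm{M}}\bm{z}]_{i+1}$, using the fact that $[\tilde{\bm{M}}]_{i+1,j}=\binom{i}{j-1}/\binom{D-1}{j-1}$ for $j\leq i+1$ and zero otherwise, so $c_i = [\tilde{\bm{M}}\bm{z}]_{i+1}$.

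The only nontrivial step is establishing the monomial-to-Bernstein conversion formula, but this is a short combinatorial manipulation rather than a genuine obstacle; everything else is a one-line convexity argument from the partition-of-unity and non-negativity of $\{B_{i,D-1}\}$. Consequently, the bounds in the lemma follow by taking $\min_i$ and $\max_i$ inside the convex combination $g(\xi)=\sum_i c_i B_{i,D-1}(\xi)$ and using $\sum_i B_{i,D-1}(\xi)=1$ on $[0,1]$.
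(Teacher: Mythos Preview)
Your argument is correct: you correctly identify $\tilde{\bm{M}}$ as the monomial-to-Bernstein change-of-basis matrix, and the bounds then follow from the partition-of-unity and non-negativity of the Bernstein basis on $[0,1]$. Note, however, that the paper does not supply its own proof of this lemma; it is simply quoted as a known result from \cite{CaSh66}. Your derivation is exactly the classical one underlying that reference, so there is nothing to compare against beyond saying that you have filled in the standard details the paper omits.
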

As a continuous function on a closed interval is bijective iff it is strictly monotone, we may obtain a suitable set of bijections by bounding $g(\xi)>0 \ \forall \ \xi\in[0,1]$ and integrating the polynomial $g(\xi)$ w.r.t. $\xi$. A formal proposition characterizing the resulting set is given in the following:
\begin{proposition}\label{prop:cm_polynomials}
Let $g(\xi) \in \mc{P}_{1,D}$ be a polynomial in $\xi$ of degree $D$ defined on $[0,1]$. Then, $g(\xi)$ is monotone and continuous on $[0,1]$ if it holds that $g \in \mc{V}(\bm{z},c) \subset \mc{P}_{1,D}$ with
\begin{align}\label{equ:polint2}
\mc{V}(\bm{z},c) := & \Bigl\{ \sum\nolimits_{d=1}^{D} z_d \xi^d + c \ \Big\vert \ \bm{z} \in \mc{Z}\setminus \bold{0} \Bigr\},
\end{align} 
\begin{align}
\mc{Z}&:= \{\bm{z} \vert \bm{M} \bm{z} \geq 0\}, \\
\bm{M}&:=  \tilde{\bm{M}} \rs{diag}\Bigl( \left[1,2,3,\hdots,D \right]^T \Bigr).
\end{align}
\end{proposition}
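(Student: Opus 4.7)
The plan is to handle continuity and monotonicity separately. Continuity is immediate: every member of $\mc{V}(\bm{z},c)$ is a polynomial in $\xi$, and polynomials are continuous on $\bb{R}$, hence on $[0,1]$. The substance of the proposition lies in monotonicity, and the key observation is that the matrix $\bm{M}$ is engineered precisely so that $\bm{M}\bm{z}\geq 0$ is the Bernstein-type sufficient condition of Lemma \ref{lem:bernstein_bounds} for the derivative of $g$ to be non-negative on $[0,1]$.

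Concretely, for $g(\xi)=\sum_{d=1}^{D}z_d\xi^d+c$ one computes $g'(\xi)=\sum_{d=1}^{D}d\,z_d\,\xi^{d-1}=\sum_{d=0}^{D-1}w_{d+1}\xi^d$, where $\bm{w}:=\rs{diag}([1,2,\ldots,D]^T)\bm{z}$. Thus $g'$ is a real polynomial of degree at most $D-1$ on $[0,1]$ with coefficient vector $\bm{w}$, which is exactly the setting of Lemma \ref{lem:bernstein_bounds}. Applying that lemma yields $g'(\xi)\geq \min_i[\tilde{\bm{M}}\bm{w}]_i$ on $[0,1]$, and substituting the definition of $\bm{w}$ gives $\tilde{\bm{M}}\bm{w}=\tilde{\bm{M}}\rs{diag}([1,\ldots,D]^T)\bm{z}=\bm{M}\bm{z}$. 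The hypothesis $\bm{z}\in\mc{Z}$, i.e. $\bm{M}\bm{z}\geq 0$ componentwise, then forces $g'(\xi)\geq 0$ on all of $[0,1]$, establishing that $g$ is monotone non-decreasing.

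To upgrade to strict monotonicity (as needed for $g$ to be a bijection onto its image, in line with the paragraph preceding the proposition), I would invoke the side-condition $\bm{z}\neq \bold{0}$: whenever some $z_d\neq 0$ one has $d\,z_d\neq 0$, so $g'$ is a non-zero polynomial and can vanish at only finitely many points of $[0,1]$; combined with $g'\geq 0$ this excludes any subinterval on which $g$ is constant, so $g$ is strictly increasing. The argument is essentially bookkeeping; the only step doing any real work, and the one I would flag, is recognizing that the factor $\rs{diag}([1,\ldots,D]^T)$ built into the definition of $\bm{M}$ exactly absorbs the coefficient shift produced by differentiation, so that a single linear inequality on $\bm{z}$ certifies positivity of $g'$ through the Bernstein bound of Lemma \ref{lem:bernstein_bounds}.
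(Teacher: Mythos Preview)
Your argument is correct and matches the paper's approach: the paper simply states that the result follows from Lemma~\ref{lem:bernstein_bounds} ``by integration and comparison of terms,'' which is the same mechanism you spell out (viewed from the differentiation side), namely that the factor $\rs{diag}([1,\ldots,D]^T)$ in $\bm{M}$ absorbs the coefficient shift from $g\mapsto g'$ so that $\bm{M}\bm{z}\geq 0$ becomes the Bernstein lower bound $\tilde{\bm{M}}\bm{w}\geq 0$ for $g'$. Your added remark on strict monotonicity via $\bm{z}\neq\bold{0}$ is a useful supplement that the paper leaves implicit.
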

\begin{proof}
The proof follows from Lemma \ref{lem:bernstein_bounds} by integration and comparison of terms.
\end{proof}
The reader should note, that $\mc{Z}$ defines a polyhedral cone (where the origin needs to be excluded to ensure strict monotonicity for the set $\mc{V}(\bm{z},c)$) which makes it viable for practical optimization algorithms.

\subsection{Nomographic Approximation by cone-constrained Rayleigh quotient optimization}
It remains to study the structure of the inner approximation problem. To this end, let $g \in \mc{P}_{1,D} \supset \mc{V}(\bm{z},c)$ with domain $\Omega:=[0,1]$. Then we can establish the following results on $\sigma^2$ and $\sigma_k^2$.
\begin{lemma}\label{lem:quadratic_form1}
Let $g \in \mc{P}_{1,D}$, $\varphi:=(g \circ f) \in \mc{L}_{2}([0,1]^K)$ and $f: [0,1]^K \to [0,1]$. Then, it holds that $\sigma^2 = \int_{[0,1]^K} \varphi^2(\bm{x}) \ d\bm{x} - \left( \int_{[0,1]^K} \varphi(\bm{x}) \ d\bm{x} \right)^2$ can be written in quadratic form $\sigma^2 = \bm{z}^T \bm{B} \bm{z}$ with $\bm{B} := \bm{B}^{(1)} - \bm{b}^{(2)} \bm{b}^{(2),T} \in\bb{R}^{D \times D}$ and
\begin{align}
[\bm{B}^{(1)}]_{ij}: = \int_{[0,1]^K}   f(\bm{x})^{i+j}  \ d\bm{x}, \ 
{b}_i^{(2)} := \int_{[0,1]^K}  f(\bm{x})^i \ d\bm{x} 
\end{align}
which is independent of the chosen constant $c$. 
\end{lemma}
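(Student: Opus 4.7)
The plan is a direct algebraic computation: substitute the explicit form $\varphi(\bm{x}) = c + \sum_{d=1}^{D} z_d f(\bm{x})^d$ into the variance expression, expand the integrands, and collect terms as a quadratic form in $\bm{z}$. Since $f$ takes values in $[0,1]$, every moment $\int_{[0,1]^K} f(\bm{x})^i \, d\bm{x}$ is finite (bounded by $1$), so term-by-term integration of the finite sums is unproblematic and, in particular, $\varphi \in \mc{L}_2([0,1]^K)$ is automatic.

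The key step is to expand
\[ \varphi^2(\bm{x}) = c^2 + 2c \sum_{d=1}^{D} z_d f(\bm{x})^d + \sum_{i,j=1}^{D} z_i z_j f(\bm{x})^{i+j}, \]
integrate, and similarly square the first moment $\int \varphi \, d\bm{x} = c + \bm{b}^{(2),T}\bm{z}$. In the difference $\int \varphi^2 \, d\bm{x} - (\int \varphi \, d\bm{x})^2$, both the $c^2$ terms and the linear-in-$\bm{z}$ cross terms $2c\,\bm{b}^{(2),T}\bm{z}$ cancel identically, which is precisely the statement that the result is independent of $c$.

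What remains is $\bm{z}^T \bm{B}^{(1)} \bm{z} - (\bm{b}^{(2),T}\bm{z})^2$. Rewriting the scalar $(\bm{b}^{(2),T}\bm{z})^2 = \bm{z}^T \bm{b}^{(2)}\bm{b}^{(2),T}\bm{z}$ immediately yields $\sigma^2 = \bm{z}^T(\bm{B}^{(1)} - \bm{b}^{(2)}\bm{b}^{(2),T})\bm{z} = \bm{z}^T \bm{B}\bm{z}$, as claimed. There is no real obstacle here; the lemma is essentially a bookkeeping identity after substitution. The only items requiring attention are fixing the indexing so that $z_d$ corresponds to the coefficient of $\xi^d$ (with the constant term absorbed into $c$) and noting in passing that the resulting $\bm{B}$ is symmetric, since $[\bm{B}^{(1)}]_{ij} = \int f^{i+j} \, d\bm{x}$ is symmetric in $(i,j)$ and $\bm{b}^{(2)}\bm{b}^{(2),T}$ is symmetric by construction, which is a useful property for the cone-constrained Rayleigh-quotient formulation announced in the section heading.
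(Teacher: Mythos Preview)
Your proposal is correct and follows essentially the same approach as the paper: the paper's proof in Appendix~\ref{app:A} likewise substitutes $\varphi = c + \sum_{d=1}^{D} z_d f^d$, expands $\Delta^{(1)} = \int \varphi^2$ and $\Delta^{(2)} = (\int \varphi)^2$ separately, subtracts, and observes that the $c^2$ and $2c\sum z_d\int f^d$ contributions cancel identically, leaving the quadratic form $\bm{z}^T(\bm{B}^{(1)} - \bm{b}^{(2)}\bm{b}^{(2),T})\bm{z}$. Your additional observation that $\bm{B}$ is symmetric is not stated in the paper's proof but is correct and indeed relevant to the subsequent Rayleigh-quotient formulation.
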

\begin{proof}
The proof is deferred to Appendix \ref{app:A}.
\end{proof}
\begin{lemma}\label{lem:quadratic_form2}
  Let $g$, $\varphi$ and $f$ be as in Lemma
  \ref{lem:quadratic_form1} and let the mixed integrals
\begin{align}
\insl{[\bm{A}^{(1)}(k)}]_{ij} & :=  \int_{[0,1]} \bigg( \int_{[0,1]^{K-1}} f(\bm{x})^i \ d\bm{x}_{\mc{K} \backslash k}  \\
& \times \int_{[0,1]^{K-1}} f(\bm{x})^j \ d\bm{x}_{\mc{K} \backslash k} \bigg) \ dx_k  \ \forall k in \mc{K} \nonumber
\end{align}
exist and be finite. Then, we have
$\sigma_k^2 = \bm{z}\insl{^T} \bm{A}_k \bm{z} \ \forall k \in \mc{K}$ with
\begin{align}
\bm{A}_k  := \bm{A}^{(1)}(k) - \bm{b}^{(2)} \bm{b}^{(2),T} \in \bb{R}^{D \times D}\,.
\end{align}
\end{lemma}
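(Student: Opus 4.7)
The plan is to mirror the argument behind Lemma~\ref{lem:quadratic_form1}: write $\varphi = g\circ f$ with $g(\xi) = \sum_{d=1}^{D} z_d \xi^d + c$, compute the order-one ANOVA component $f_{\{k\}}$ using Alg.~\ref{alg:anova}, and then read off a symmetric bilinear form in $\bm{z}$ whose matrix coincides with $\bm{A}_k$.

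First I would apply linearity of the integral to compute the partial marginal
\begin{align*}
\int_{[0,1]^{K-1}} \varphi(\bm{x})\, d\bm{x}_{\mc{K}\backslash k} = \sum_{d=1}^{D} z_d\, h_k^{(d)}(x_k) + c,
\end{align*}
where $h_k^{(d)}(x_k) := \int_{[0,1]^{K-1}} f(\bm{x})^d\, d\bm{x}_{\mc{K}\backslash k}$. Integrating once more in $x_k$ identifies the grand mean $f_{\emptyset} = \sum_{d=1}^{D} z_d\, b_d^{(2)} + c$ with $\bm{b}^{(2)}$ as in Lemma~\ref{lem:quadratic_form1}. Subtracting as prescribed by Alg.~\ref{alg:anova} eliminates the additive constant $c$ and yields $f_{\{k\}}(x_k) = \sum_{d=1}^{D} z_d\,\bigl(h_k^{(d)}(x_k) - b_d^{(2)}\bigr)$.

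Next I would square this expression, integrate against $dx_k$ on $[0,1]$, and exchange sum and integral to obtain $\sigma_k^2 = \bm{z}^T \bm{C}_k \bm{z}$. Expanding the square into four pieces, the key simplification is the marginal-consistency identity $\int_0^1 h_k^{(d)}(x_k)\,dx_k = b_d^{(2)}$, which collapses both cross-terms to $-b_i^{(2)} b_j^{(2)}$ and leaves
\begin{align*}
[\bm{C}_k]_{ij} = [\bm{A}^{(1)}(k)]_{ij} - b_i^{(2)} b_j^{(2)},
\end{align*}
which is precisely $\bm{A}_k$ as claimed.

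I do not anticipate a genuine obstacle. Finiteness of every integral that appears is handed to us by the hypothesis that $\bm{A}^{(1)}(k)$ exists together with $f([0,1]^K)\subseteq[0,1]$, so Fubini--Tonelli applies throughout. The only care required is bookkeeping: noticing that $c$ drops out, just as in Lemma~\ref{lem:quadratic_form1}, by translation invariance of the variance, and that marginal consistency is exactly what turns the cross-terms into the rank-one correction $\bm{b}^{(2)}\bm{b}^{(2),T}$. If anything qualifies as a subtle point, it is verifying symmetry of $\bm{C}_k$ directly from the expression for $[\bm{C}_k]_{ij}$, which then justifies its interpretation as the Gram matrix underlying the quadratic form.
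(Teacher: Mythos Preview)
Your proposal is correct and follows essentially the same route as the paper: both compute the order-one ANOVA component of $\varphi=g\circ f$, show the constant $c$ cancels, and use the Fubini/marginal-consistency identity $\int_0^1 h_k^{(d)}\,dx_k=b_d^{(2)}$ to reduce the quadratic form to $\bm{A}^{(1)}(k)-\bm{b}^{(2)}\bm{b}^{(2),T}$. The only cosmetic difference is that you subtract before squaring (so $c$ drops out immediately), whereas the paper squares first and then tracks the cancellation of all $c$-terms.
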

\begin{proof}
The proof is deferred to Appendix \ref{app:B}.
\end{proof}

\insl{Now, with the above results in hand, we are in a position to state our
main result.
\begin{proposition}\label{prop:optim_problem}
  Let $g$, $\varphi$ and $f$ be as in Lemma
  \ref{lem:quadratic_form1}. Then, given some $\varepsilon>0$, a function
  $f: [0,1]^K \to [0,1]$ has a nomographic approximation in accordance with Def. \ref{def:eff_dimension} and Rem. \ref{rem:approx} with continuous and monotone outer
  function $\psi$ and continuous inner functions $\varphi_k$ if and
  only if
\begin{align}\label{equ:optim_problem}
(1-\varepsilon) \leq \underset{ \bm{z}\in\mc{Z}\setminus \bold{0} }{\max} \ \frac{\bm{z}^T {\bm{A}} \bm{z}}{\bm{z}^T {\bm{B}} \bm{z}},
\end{align}
where $\bm{A}=\sum_{k=1}^K\bm{A}_k$, and the matrices
$\bm{A}_k,k=1\dotsc K$, $\bm{B}$ and $\bm{M}$ are given by Lemma
\ref{lem:quadratic_form1}, \ref{lem:quadratic_form2} and Prop.
\ref{prop:cm_polynomials}. 
\end{proposition}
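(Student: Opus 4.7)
The plan is to chain together the preceding lemmas so that the condition for existence of a nomographic approximation collapses into the Rayleigh-quotient inequality \eqref{equ:optim_problem}. By the Lemma that immediately precedes Prop.~\ref{prop:cm_polynomials}, a nomographic approximation in the sense of Def.~\ref{def:eff_dimension} and Rem.~\ref{rem:approx}, with outer function $\psi = g^{-1}$, exists if and only if $\varphi = g\circ f$ has effective superposition dimension $1$, which by Def.~\ref{def:eff_dimension} amounts to $\sum_{|\mc{S}|\leq 1}\sigma_\mc{S}^2 \geq (1-\varepsilon)\sigma^2$. My task is therefore to rewrite this inequality in terms of the coefficient vector $\bm{z}$ of $g$ and to identify the admissible $\bm{z}$.

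First I would use that Alg.~\ref{alg:anova} sets $\sigma_\emptyset := 0$ by construction (the mean-square term is excluded from $\sigma^2$), so $\sum_{|\mc{S}|\leq 1}\sigma_\mc{S}^2 = \sum_{k=1}^K \sigma_k^2$. Substituting the quadratic forms supplied by Lemma~\ref{lem:quadratic_form1} and Lemma~\ref{lem:quadratic_form2}, and writing $\bm{A} := \sum_{k=1}^K \bm{A}_k$, the effective-superposition condition becomes
\begin{align}
\bm{z}^T \bm{A}\,\bm{z} \;\geq\; (1-\varepsilon)\,\bm{z}^T \bm{B}\,\bm{z}.
\end{align}

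Next I would translate the monotonicity/continuity requirement on $\psi$ into a constraint on $\bm{z}$. By Prop.~\ref{prop:cm_polynomials}, $g \in \mc{V}(\bm{z},c)$ is continuous and strictly monotone on $[0,1]$ iff $\bm{z} \in \mc{Z}\setminus\bold{0}$, in which case $\psi = g^{-1}$ exists as a continuous monotone function on the image. The additive constant $c$ cancels out of $\bm{B}$ (Lemma~\ref{lem:quadratic_form1}) and out of each $\bm{A}_k$ by the same centering argument used in Lemma~\ref{lem:quadratic_form2}, so the Rayleigh ratio depends on $\bm{z}$ alone. Since $\mc{Z}$ is a cone and the ratio is $0$-homogeneous in $\bm{z}$, the existence of some admissible $\bm{z}\in\mc{Z}\setminus\bold{0}$ satisfying the inequality is equivalent to the maximum of the ratio over $\mc{Z}\setminus\bold{0}$ being at least $1-\varepsilon$. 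Both directions of the iff then follow by reading this chain of equivalences forwards and backwards.

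I expect the substantive work to lie in the supporting lemmas rather than in this argument, which is essentially a bookkeeping step that glues them together. A point worth verifying explicitly is that the maximum in \eqref{equ:optim_problem} is actually attained: because the Rayleigh ratio is invariant under positive scaling, one can restrict attention to the compact slice $\{\bm{z}\in\mc{Z}:\bm{z}^T\bm{B}\bm{z}=1\}$, using that $\bm{B}$ is a covariance-like (hence positive semidefinite) matrix, on which a continuous function attains its supremum. I would also flag that the resulting problem is a cone-constrained generalized eigenvalue problem and hence non-convex, so the proposition furnishes an existence criterion within the polynomial class $\mc{V}(\bm{z},c)$ rather than a directly tractable algorithm.
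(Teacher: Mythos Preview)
Your proposal is correct and follows essentially the same route as the paper's own proof: both reduce the effective-superposition-dimension-$1$ condition via $\sigma_\emptyset=0$ to $\sum_k\sigma_k^2\geq(1-\varepsilon)\sigma^2$, insert the quadratic forms from Lemmas~\ref{lem:quadratic_form1} and~\ref{lem:quadratic_form2}, restrict $\bm{z}$ to $\mc{Z}\setminus\bold{0}$ via Prop.~\ref{prop:cm_polynomials}, and pass to the maximum. Your write-up is in fact slightly more careful than the paper's, which merely asserts parenthetically that the maximum ``exists'' without your compactness argument, and you correctly flag that the constant $c$ drops out and that the resulting criterion is non-convex.
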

\begin{proof}
  Let $\varepsilon\in(0,1)$ be given and arbitrary. Then, by
  Def. \ref{def:eff_dimension} and \eqref{equ:nomographic_anova}, a
  function $f$ has a nomographic approximation if (and only if) there
  exists $g \in \mc{V}(\bm{z})$ such that
  $\varphi(\bm{x}) = (g \circ f)(\bm{x})$ is of effective superposition dimension $1$ and
  $(1-\varepsilon)\leq R:=\tfrac{\sum_{ \lvert \mc{S} \rvert \leq 1}
    \sigma_\mc{S}^2}{\sigma^2}=\tfrac{\sum_{k}
    \sigma_{k}^2}{\sigma^2}$
  where both $\sigma_\mc{S}^2=\sigma^2_k,k=1\dotsc K$, and $\sigma^2$
  depend on $\bm{z}\neq \bold{0}$ (through the skewing function $g$). Now taking
  the maximum (which exists) of $R=R(\bm{z})$ over all
  $\bm{z}\in\mc{Z}\setminus \bold{0}$, and by
  considering Lemma \ref{lem:quadratic_form1} and Lemma
  \ref{lem:quadratic_form2} with \eqref{equ:nomographic_anova}, we
  can conclude that $f$ has a nomographic approximation if and only if
  \eqref{equ:optim_problem} holds.
\end{proof}
} 
By applying the matrix lift $\bm{Z}:= \bm{z} \bm{z}^T$, one can show that if the
matrix $\bm{B}$ is nonsingular, the optimization problem outlined in
Prop. \ref{prop:optim_problem} can be recast
as the generally nonconvex optimization problem
\begin{subequations}
\label{equ:opt_noncvx}
\begin{align}
\bm{Z}^\star \in \underset{ \bm{Z} \in \bb{S}^{D \times D } }{ \rs{argmax}} \ & \rs{tr}\left\{ {\bm{A}} \bm{Z} \right\} \label{eq:opt_noncvx.1}\\ 
  \rs{s.t.} \ & \rs{tr}\left\{ {\bm{B}} \bm{Z} \right\} = \delta  \label{eq:opt_noncvx.2}\\
  & \bm{Z} \succeq \bold{0} \label{eq:opt_noncvx.3}\\
  & [\bm{M} \bm{Z} \bm{M}^T]_{i,j} \geq {0} \ \forall \ \{i,j\}\in \mc{K}^2  \label{eq:opt_noncvx.4}\\
  & \rs{rank}(\bm{Z}) = 1.  \label{eq:opt_noncvx.5}
\end{align}
\end{subequations}
Due to the high complexity of solving \eqref{equ:opt_noncvx} directly, we apply a technique known as semidefinite relaxation \cite{LuMaSoYe10} to the \emph{nonconvex semidefinite program} by neglecting the (nonconvex) rank constraint \eqref{eq:opt_noncvx.5} first and solving the resulting convex SDP. Then, the candidate solution set for the original problem \eqref{equ:opt_noncvx} is given by 
$\{\pm \sqrt{\lambda_1} \bm{q}_1\}$, where $\lambda_1$ and $\bm{q}_1$ denote the largest eigenvalue and eigenvector of $\bm{Z}^\star$. The relaxation is tight, i.e. the solution $\bm{z}^\star \in \mc{Z} \cap \{\pm \sqrt{\lambda_1} \bm{q}_1\}$ of the SDR coincides with the solution to \eqref{equ:opt_noncvx} if $\rs{rank}(\bm{Z}^\star)=1$. If the rank constraint is violated, the solution will in general be suboptimal. In this case, we apply a heuristic to obtain a suboptimal feasible solution
\begin{align}
\bm{z}^\star \in \mc{Z} \cap \Bigl\{\bm{M}^{-1} \left( \bm{M} \sqrt{\lambda_1} \bm{q}_1 \right)_{+},\bm{M}^{-1} \left( -\bm{M} \sqrt{\lambda_1} \bm{q}_1 \right)_{+} \Bigr\},
\end{align}
where $(\cdot)_+$ denotes the projection onto the positive orthant. This heuristic is motivated by our simulation results, which show that applying the projection onto the positive orthant after transformation by the matrix $\bm{M}$ yields a numerically much more stable solution compared to applying the projection directly by computing 
\begin{align*}
\bm{z}^{\star} \in \rs{argmin}_{ \{ \bm{z} \vert \bm{M} \bm{z} \geq \bold{0} \}} \min \bigl\{ \lVert \bm{z} - \sqrt{\lambda_1} \bm{q}_1 \rVert_2^2, \lVert \bm{z} + \sqrt{\lambda_1} \bm{q}_1 \rVert_2^2 \bigr\}.
\end{align*}
The resulting overall approximation algorithm is described in Alg. \ref{alg:nom_approx}.\footnote{In the spirit of reproducible research, the corresponding MATLAB implementation of Alg. \ref{alg:nom_approx} will be made available on one of the authors websites.}
\begin{algorithm}[h]\label{alg:nom_approx}
\SetKwInput{KwData}{\textbf{Input}}
\SetKwInput{KwResult}{\textbf{Output}}
\KwData{$f:[0,1]^K \to [0,1]$, $D$}
\KwResult{$\{\varphi_\mc{S}\}_{\lvert \mc{S} \rvert \leq 1}$, $\psi$}
(1) Compute $\bm{A}$, $\bm{B}$, $\bm{M}$ using Lemma \ref{lem:quadratic_form1}, \ref{lem:quadratic_form2} and \ref{lem:bernstein_bounds}\;
(2) Compute $\bm{z}^\star$ by solving SDR of \eqref{equ:opt_noncvx}\;
(3) Compute ANOVA for $g^\star \circ f$ with $g^\star:=\mc{V}(\bm{z}^\star)$\;
(4) Compute $\psi$ using numerical inversion of $g^\star$\;
\caption{Approximation of $f$ by $\psi ( \sum\nolimits_{\lvert \mc{S} \rvert \leq
  1} \varphi_\mc{S} (x_\mc{S}) ) $.}
\end{algorithm}
\vspace{-0.5cm}
\section{Numerical Results}
To evaluate the performance of the proposed algorithm, we simulate a network with two sensors measuring $x_1$ and $x_2$ and one fusion node (see also Fig. \ref{fig:innet_comp}) that are deployed to compute a desired function
\begin{align}
f(x_1,x_2) = \frac{1}{9} \left( x_1 + x_1 x_2 + x_2 \right)^2
\end{align}
in a distributed manner. We choose the given example to highlight the impact of optimizing the skewing function to cancel out the interaction among the variables through the product $x_1 x_2$, where it is rather counter-intuitive that a good nomographic approximation exists. Many other interesting functions may come from data-driven models for distributed regression/classification using polynomial kernels which is beyond the scope of this paper.
Using the algorithm described in Alg. \ref{alg:nom_approx} with skewing function $g$ of degree $D=20$, we find a nomographic approximation with objective value $\varepsilon = 10^{-3}$. The resulting functions and the overall approximation error are given in Table \ref{tab:aopt} and Fig. \ref{fig:results}, where it can be seen in Fig. \ref{fig:results}(f) that the resulting overall approximation error is bounded by $\lvert f - \psi(\sum_{\lvert \mc{S} \rvert \leq 1} \varphi_\mc{S}) \rvert \leq 6 \times 10^{-3}$. To highlight the effectiveness of using an optimized skewing function, the variances for approximation by a purely additive model (i.e. without using a skewing function) are given in Table \ref{tab:direct_anova} as a reference resulting in an objective value of $\varepsilon = 0.12$.
\begin{table}
\centering
\begin{tabular}{ | l | l | l | }
\hline
$\sigma_{\{1\}}^2 = 0.0168$ &
$\sigma_{\{2\}}^2 = 0.0168$ &
$\sigma_{\{1,2\}}^2 = 0.0043$ \\
\hline
$\sigma^2=0.038$  &
$ \Bigl(\sigma_{\{1\}}^2 + \sigma_{\{2\}}^2\Bigr)\sigma^{-2} = 0.88$ &
$ \varepsilon = 0.12 $ \\
\hline
\end{tabular}
\caption{Direct evaluation of Alg. \ref{alg:anova} for given test function.}
\label{tab:direct_anova}
\end{table}

\begin{table}
\centering
\begin{tabular}{ | r | r | r | r | }
\hline
$z_1,\hdots,z_5$ & $z_6,\hdots,z_{10}$ & $z_{11},\hdots,z_{15}$ & $z_{16},\hdots,z_{20}$ \\ \hline \hline
    $1.2803$ & $-134.14$	& $442644.0$	&  $-366688.0$ \\ \hline 
   $-12.162$ & $ 2637.0$	& $-697011.0$	&  $145299.0$ \\ \hline
    $72.975$ & $-21534.0$	&  $874766.0$	&	  $-37288.0$ \\ \hline
   $-236.66$ & $84667.0$	& $-862822.0$	&   $5220.6$ \\ \hline
   $334.42$ & $-222633.0$	& $652977.0$	&   $-247.38$ \\  
\hline
\end{tabular}
\caption{$\bm{z}^\star$ obtained by Alg. \ref{alg:nom_approx} and $D=20$.}
\label{tab:aopt}
\end{table}

\begin{figure}
\subfigure[Original function.]{
\psfrag{ylabel}[bc][bc]{\tiny{$x_1$}}
\psfrag{xlabel}[bc][bc]{\tiny{$x_2$}}
\psfrag{zlabel}[bc][bc]{\tiny{$f(x_1,x_2)$}}
\psfrag{0.5}[bc][bc]{\tiny{$0.5$}}
\psfrag{0}[bc][bc]{\tiny{$0$}}
\psfrag{1}[bc][bc]{\tiny{$1$}}
  \includegraphics[width= 0.45\linewidth]{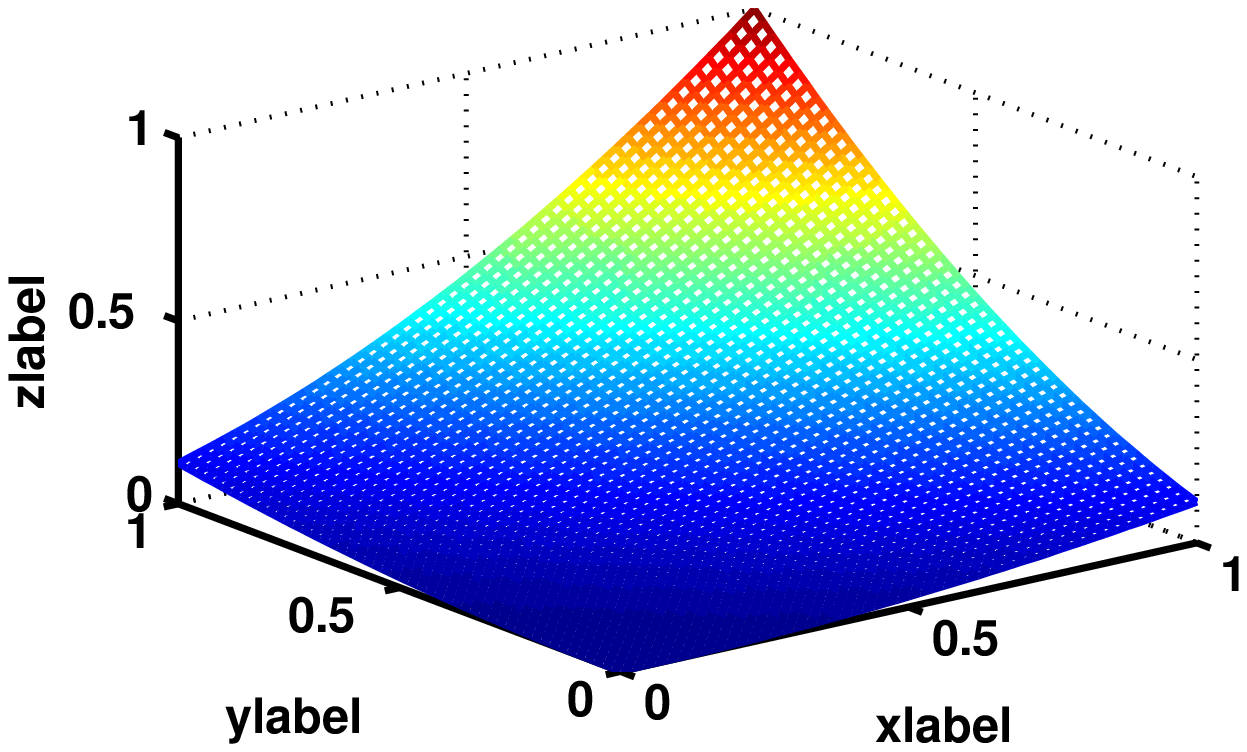}
 } \quad
\subfigure[Value of the SDP \eqref{equ:opt_noncvx} for varying degrees of the skewing function $g$.]{
\psfrag{degree D}{\tiny{degree $D$}}
\psfrag{1-eps}[br][bt]{\tiny{$1-\varepsilon$}}
\psfrag{zlabel}{\tiny{$f(x_1,x_2)$}}
\psfrag{sdp}{\tiny{SDP $\bm{Z}^\star$}}
\psfrag{sdp rank 1}{\tiny{feasible $\bm{z}^\star$}}
\psfrag{0.94}[br][br]{\tiny{$0.94$}}
\psfrag{0.95}[br][br]{\tiny{$0.95$}}
\psfrag{0.96}[br][br]{\tiny{$0.96$}}
\psfrag{0.97}[br][br]{\tiny{$0.97$}}
\psfrag{0.98}[br][br]{\tiny{$0.98$}}
\psfrag{0.99}[br][br]{\tiny{$0.99$}}
\psfrag{1}[br][br]{\tiny{$1$}}
\psfrag{5}{\tiny{$5$}}
\psfrag{10}{\tiny{$10$}}
\psfrag{15}{\tiny{$15$}}
\psfrag{20}{\tiny{$20$}}
  \includegraphics[width= 0.45\linewidth]{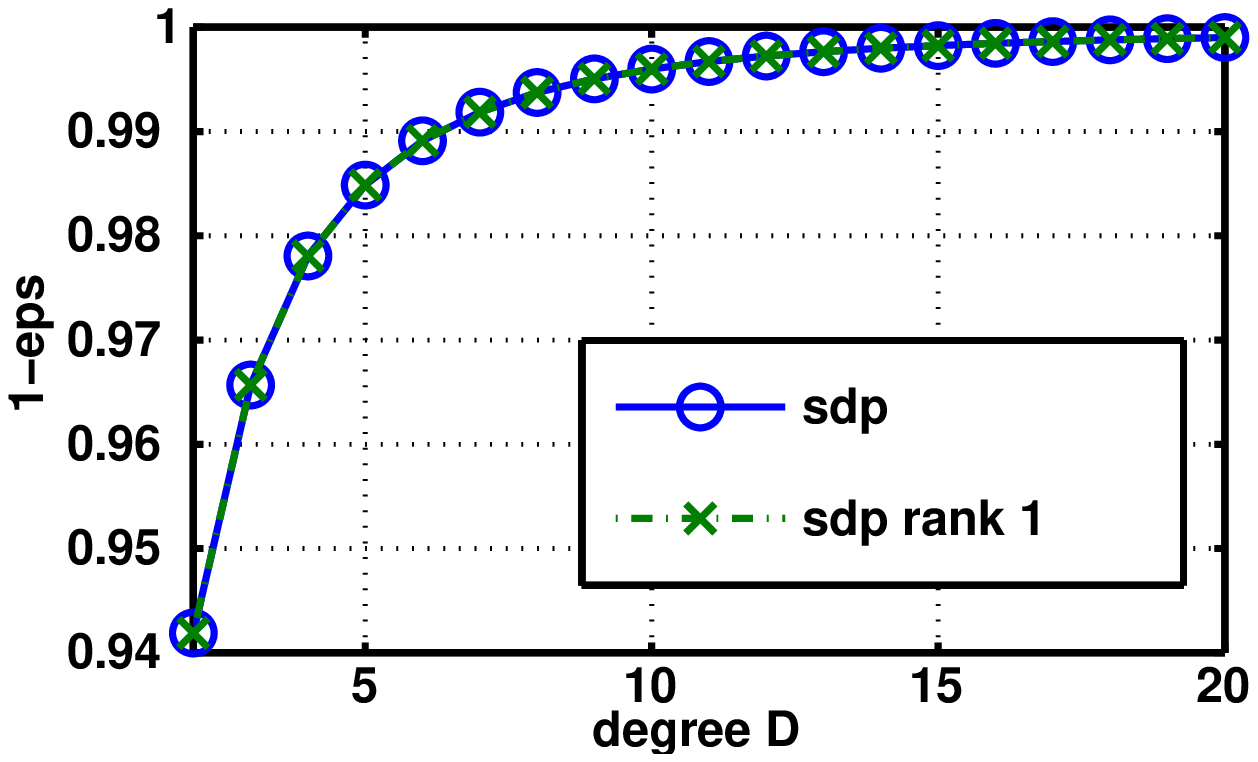}
 } 
 
 \subfigure[$\varphi_1(x_1)$ obtained by Alg. \ref{alg:nom_approx} and $D=20$.]{
\psfrag{ylabel}[bc][bc]{\tiny{$x_1$}}
\psfrag{xlabel}[bc][bc]{\tiny{$x_2$}}
\psfrag{zlabel}[bc][bt]{\tiny{$\varphi_1(x_1)$}}
\psfrag{-1}[bc][bc]{\tiny{$-1$}}
\psfrag{0.5}[bc][bc]{\tiny{$0.5$}}
\psfrag{1}[bc][bc]{\tiny{$1$}}
\psfrag{0}[bc][bc]{\tiny{$0$}}
\psfrag{-0.06}[br][br]{\tiny{$-0.06$}}
\psfrag{-0.04}[br][br]{\tiny{$-0.04$}}
\psfrag{-0.02}[br][br]{\tiny{$-0.02$}}
\psfrag{0.02}[br][br]{\tiny{$0.02$}}
\psfrag{0.04}[br][br]{\tiny{$0.04$}}
  \includegraphics[width= 0.45\linewidth]{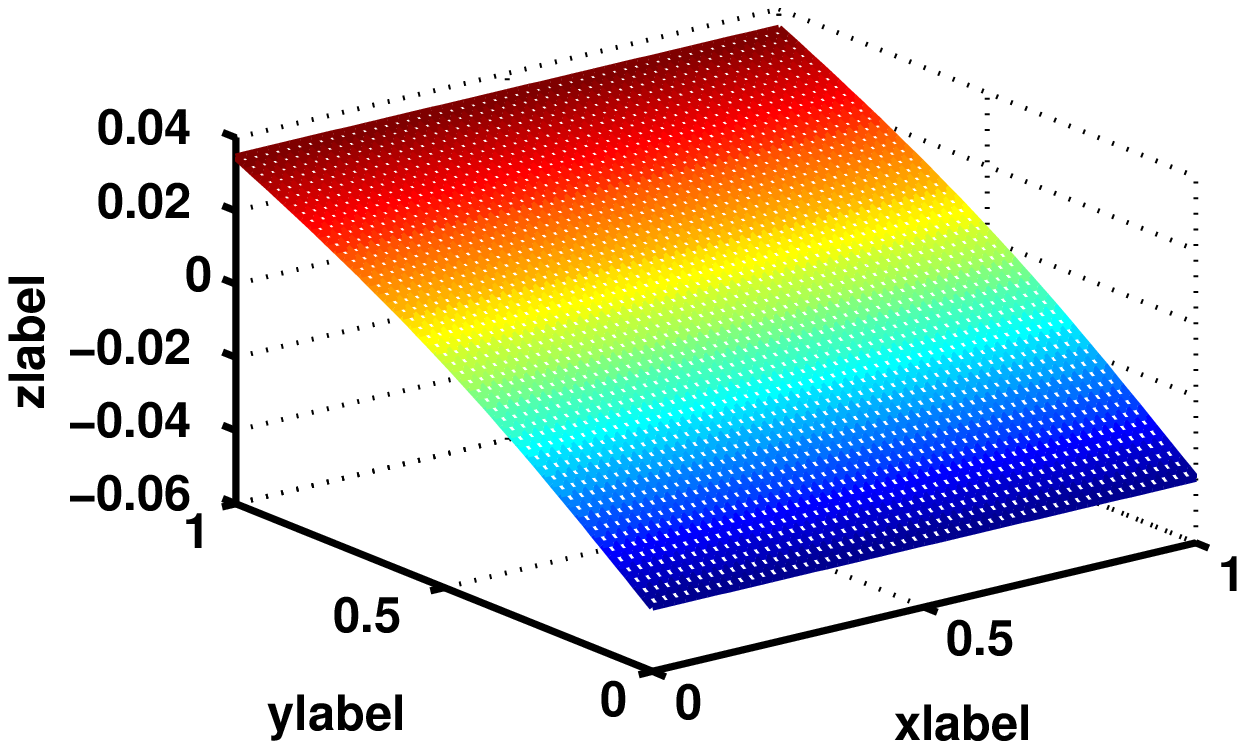}
 } \quad
 \subfigure[$\varphi_2(x_2)$ obtained by Alg. \ref{alg:nom_approx} and $D=20$.]{
 \psfrag{ylabel}[bc][bc]{\tiny{$x_1$}}
\psfrag{xlabel}[bc][bc]{\tiny{$x_2$}}
\psfrag{zlabel}[bc][bt]{\tiny{$\varphi_2(x_2)$}}
\psfrag{-1}[bc][bc]{\tiny{$-1$}}
\psfrag{0.5}[bc][bc]{\tiny{$0.5$}}
\psfrag{1}[bc][bc]{\tiny{$1$}}
\psfrag{0}[bc][bc]{\tiny{$0$}}
\psfrag{-0.06}[br][br]{\tiny{$-0.06$}}
\psfrag{-0.04}[br][br]{\tiny{$-0.04$}}
\psfrag{-0.02}[br][br]{\tiny{$-0.02$}}
\psfrag{0.02}[br][br]{\tiny{$0.02$}}
\psfrag{0.04}[br][br]{\tiny{$0.04$}}
  \includegraphics[width= 0.45\linewidth]{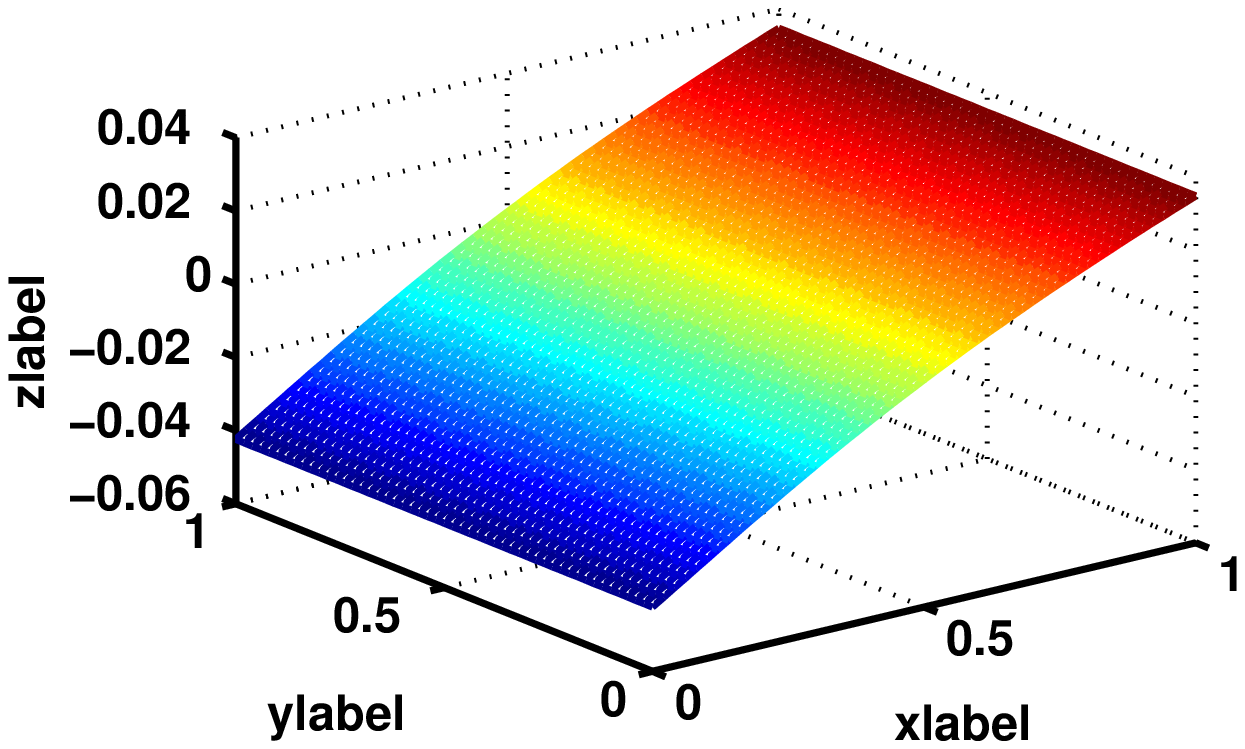}
} 

\subfigure[$\psi$ obtained by Alg. \ref{alg:nom_approx} and $D=20$.]{
\psfrag{0}[bc][bc]{\tiny{$0$}}
\psfrag{0.05}[bc][bc]{\tiny{$0.05$}}
\psfrag{0.1}[bc][bc]{\tiny{$0.1$}}
\psfrag{0.15}[bc][bc]{\tiny{$0.15$}}
\psfrag{0.2}[br][br]{\tiny{$0.2$}}
\psfrag{0.4}[br][br]{\tiny{$0.4$}}
\psfrag{0.6}[br][br]{\tiny{$0.6$}}
\psfrag{0.8}[br][br]{\tiny{$0.8$}}
\psfrag{1}[bc][bc]{\tiny{$1$}}
\psfrag{xlabel}[bc][bc]{\tiny{$y$}}
\psfrag{ylabel}[bc][bt]{\tiny{$\psi(y)$}}
  \includegraphics[width= 0.45\linewidth]{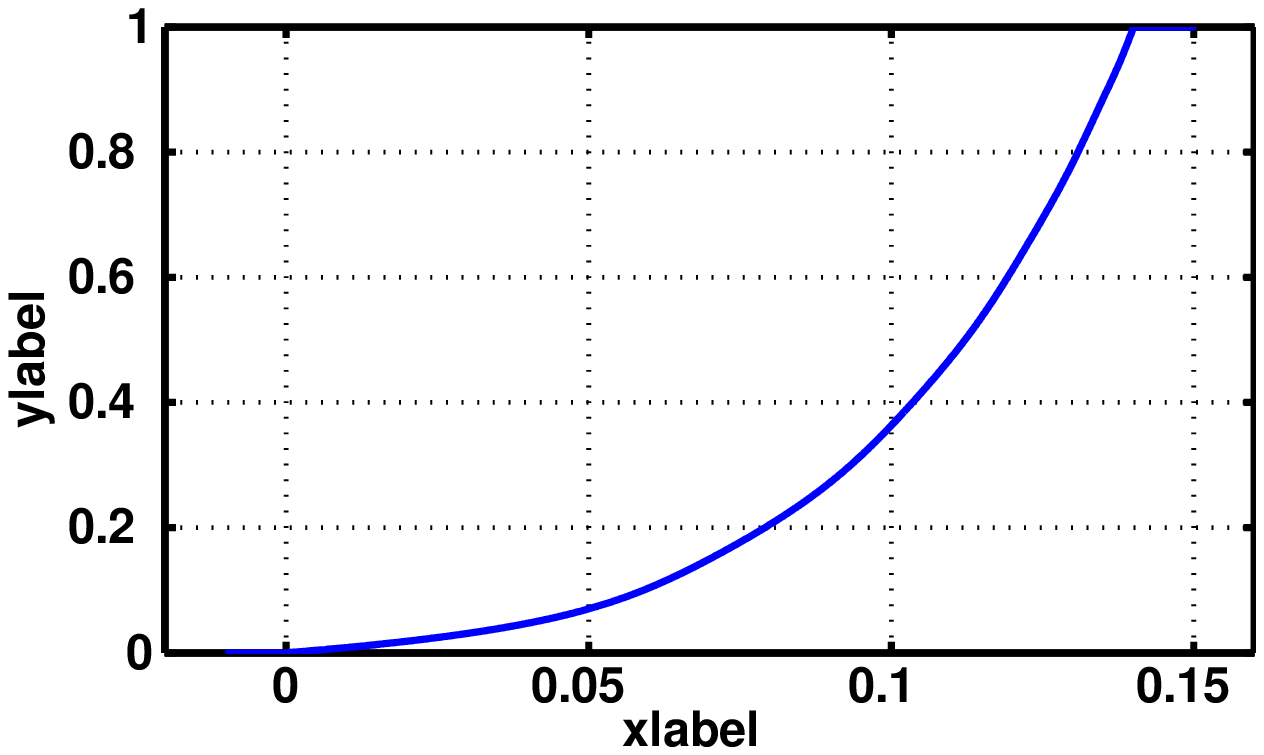}
 } \quad
\subfigure[Approximation error $e = f - \psi(\sum_{\lvert \mc{S} \rvert \leq 1} \varphi_\mc{S})$.]{
\psfrag{ylabel}[bc][bc]{\tiny{$x_1$}}
\psfrag{xlabel}[bc][bc]{\tiny{$x_2$}}
\psfrag{zlabel}[bc][bt]{\tiny{$e(x_1,x_2)$}}
\psfrag{1}[bc][bc]{\tiny{$1$}}
\psfrag{0.5}[bc][bc]{\tiny{$0.5$}}
\psfrag{0}[bc][bc]{\tiny{$0$}}
\psfrag{-6}[br][br]{\tiny{$-6$}}
\psfrag{-4}[br][br]{\tiny{$-4$}}
\psfrag{-2}[br][br]{\tiny{$-2$}}
\psfrag{2}[br][br]{\tiny{$2$}}
\psfrag{4}[br][br]{\tiny{$4$}}
\psfrag{x 10}[bc][bc]{\tiny{$\times 10^{-3}$}}
\psfrag{-3}[l][l]{\tiny{}}
  \includegraphics[width= 0.45\linewidth]{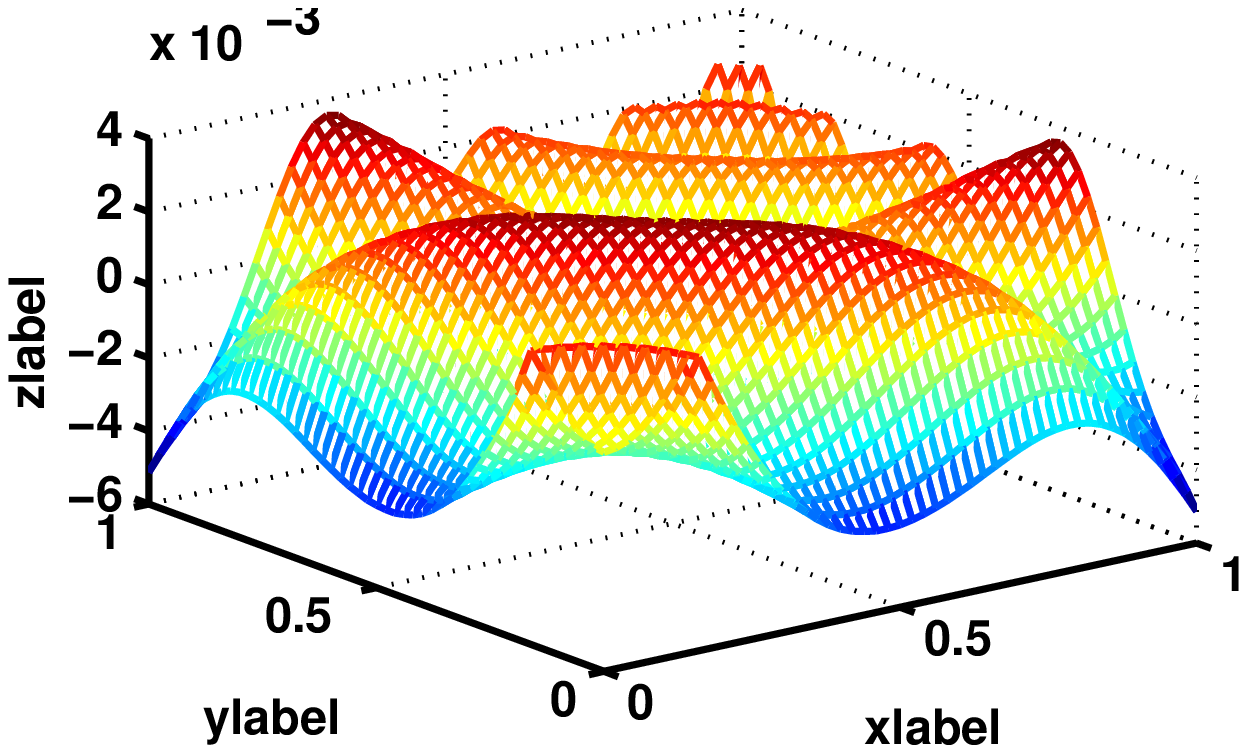}
 } 
\caption{Simulation results for Alg. \ref{alg:nom_approx} applied to the function $f=\frac{1}{9} \left( x_1 + x_1 x_2 + x_2 \right)^2$.}
\label{fig:results}
\end{figure}

\section{Conclusion}
In this paper, we studied the problem of nomographic approximation with continuous monotone outer function and continuous inner functions, which we expect to be of high practical relevance due to the amenability for distributed computation. 
By using a computationally tractable class of bijections based on Bernstein polynomials, we obtain a nomographic approximation with prescribed properties with respect to a defined distortion metric. The optimized approximation is obtained by the maximization of a cone-constrained Rayleigh-quotient. Since the problem is nonconvex, we consider its semidefinite relaxation. Though a precise characterization of the class of functions approximable in nomographic form with prescribed error metric still remains an open problem, we can see some interesting applications of the presented results, ranging from distributed learning and optimization to compressed classification.

\bibliographystyle{IEEEbib}
\bibliography{refs}

\begin{thebibliography}{10}

\bibitem{NaGa07}
B.~Nazer and M.~Gastpar,
\newblock ``Computation over multiple-access channels,''
\newblock {\em IEEE Trans. Inf. Theory}, vol. 53, no. 10, pp. 3498--3516, 2007.

\bibitem{GoSt13}
M.~Goldenbaum and S.~Stanczak,
\newblock ``Robust analog function computation via wireless multiple-access
  channels,''
\newblock {\em IEEE Trans. Comm.}, 2013.

\bibitem{GoBoSt13}
M.~Goldenbaum, H.~Boche, and S.~Stanczak,
\newblock ``Harnessing interference for analog function computation in wireless
  sensor networks,''
\newblock {\em IEEE Trans. Signal Process.}, vol. 61, no. 20, pp. 4893--4906,
  2013.

\bibitem{Ko57}
A.~N. Kolmogorov,
\newblock ``The representation of continuous functions of many variables by
  superposition of continuous functions of one variable and addition,''
\newblock {\em Proc. Amer. Math. Soc.}, 1957.

\bibitem{Sp65}
D.~A Sprecher,
\newblock ``A representation theorem for continuous functions of several
  variables,''
\newblock {\em Proc. Amer. Math. Soc.}, vol. 16, no. 2, pp. 200--203, 1965.

\bibitem{Sp14}
D.~A. Sprecher,
\newblock ``On computational algorithms for real-valued continuous functions of
  several variables,''
\newblock {\em Neural Netw.}, vol. 59, pp. 16--22, 2014.

\bibitem{Bu82}
R.~C. Buck,
\newblock ``Nomographic functions are nowhere dense,''
\newblock {\em Proc. Amer. Math. Soc.}, pp. 195--199, 1982.

\bibitem{KoGoSt14}
A.~Kortke, M.~Goldenbaum, and S.~Stanczak,
\newblock ``{Analog Computation Over the Wireless Channel: A Proof of
  Concept},''
\newblock in {\em Proc. IEEE Sensors}, Valencia, Spain, Nov. 2014.

\bibitem{NoWo08}
E.~Novak and H.~Wozniakowski,
\newblock {\em Tractability of Multivariate Problems, vol. I: Linear
  Information},
\newblock 2008.

\bibitem{KuSlWaWo10}
F.~Kuo, I.~Sloan, G.~Wasilkowski, and H.~Wo{\'z}niakowski,
\newblock ``On decompositions of multivariate functions,''
\newblock {\em Math. Comp.}, vol. 79, no. 270, pp. 953--966, 2010.

\bibitem{Gr05}
M.~Griebel,
\newblock {\em Sparse grids and related approximation schemes for higher
  dimensional problems},
\newblock 2005.

\bibitem{HaTi90}
T.~J. Hastie and R.~J. Tibshirani,
\newblock {\em Generalized additive models}, vol.~43,
\newblock CRC Press, 1990.

\bibitem{Ba14}
F.~Bach,
\newblock ``Breaking the curse of dimensionality with convex neural networks,''
\newblock {\em arXiv preprint arXiv:1412.8690}, 2014.

\bibitem{Ra13}
S.~Rahman,
\newblock ``Approximation errors in truncated dimensional decompositions,''
\newblock {\em arXiv preprint arXiv:1310.6769}, 2013.

\bibitem{KoVy14}
A.~Kolleck and J.~Vybiral,
\newblock ``On some aspects of approximation of ridge functions,''
\newblock {\em arXiv preprint arXiv:1406.1747}, 2014.

\bibitem{CaSh66}
G.~T. Cargo and O.~Shisha,
\newblock ``The bernstein form of a polynomial,''
\newblock {\em J. Res. Nat. Bur. Stand. Sect. B}, vol. 70, pp. 79, 1966.

\bibitem{LuMaSoYe10}
Z.~Luo, W.~Ma, A.~M.-C. So, Y.~Ye, and S.~Zhang,
\newblock ``Semidefinite relaxation of quadratic optimization problems,''
\newblock {\em IEEE Signal Process. Mag.}, vol. 27, no. 3, pp. 20--34, 2010.

\end{thebibliography}

\begin{appendices}
\section{Proof of $\sigma^2 = \bm{z}^T \bm{B} \bm{z}$, independence of $c$}\label{app:A}
To show that $\sigma^2$ is independent of $c$ let $g \in \mc{P}_{1,D}$, $\varphi(\bm{x}) := (g \circ f)(\bm{x}) \in \mc{L}_2(\mc{X}^K)$, $\mc{X}^K:=[0,1]^K$ and $f:=f(\bm{x})$. Then
\begin{align*}
\sigma^2 & = \underbrace{\int\nolimits_{\mc{X}^K} \varphi(\bm{x})^2 \ d\bm{x}}_{\Delta^{(1)}} - \underbrace{\Bigl( \int_{\mc{X}^K} \varphi(\bm{x}) \ d\bm{x} \Bigr)^2}_{\Delta^{(2)}} \\
\Delta^{(1)} &= \int_{\mc{X}^K} \Bigl( \sum\nolimits_{d=1}^D z_d f^d \Bigr)^2 + 2c \sum\nolimits_{d=1}^D z_d f^d  +  c^2 \ d\bm{x} \\
\Delta^{(2)} &= \Bigl( \int_{\mc{X}^K}  \sum\nolimits_{d=1}^D z_d f^d d\bm{x} \Bigr)^2 + \int_{\mc{X}^K} 2c \sum\nolimits_{d=1}^D z_d f^d + c^2 \, d\bm{x} \\
\sigma^2 & = \int_{\mc{X}^K} \Bigl( \sum_{d=1}^D z_d f^d \Bigr)^2 \ d\bm{x} - \Bigl(\sum\nolimits_{d=1}^D z_d \int_{\mc{X}^K} f^d \ d \bm{x} \Bigr)^2
\end{align*}
As $\bm{z}$ is independent of $f$ it follows by comparison of terms that $\sigma^2 = \bm{z}^T \bm{B} \bm{z}$ with $\bm{B} := \bm{B}^{(1)} - \bm{b}^{(2)} \bm{b}^{(2),T}$ and
\begin{align*}
\bm{B}_{ij}^{(1)} &:= \int_{[0,1]^K}   f^{i+j}  \ d\bm{x}, \ \{i,j\}\in \{1,\hdots,D\}^2 \\
\bm{b}_i^{(2)} & := \int_{[0,1]^K}  f^i \ d\bm{x}, \ i \in \{1,\hdots,D\}.
\end{align*}

\begin{figure*}[!t]
\normalsize
\section{Proof of $\sigma_k^2 = \bm{z}^T \bm{A}_k \bm{z}$, independence of $c$}\label{app:B}
To show that $\sigma_k^2 = \bm{z}^T \bm{A}_k \bm{z}$ is independent of $c$ let $g$, $\varphi(\bm{x})$, $\mc{X}^K$ and $f$ be as in Appendix \ref{app:A}. Then
\begin{align*}
\sigma_k^2 &= \int_{\mc{X}} \left( \int_{\mc{X}^{K-1}} \varphi(\bm{x}) \ d\bm{x}_{\mc{K} \backslash k} - \int_{\mc{X}^K} \varphi(\bm{x}) \ d\bm{x} \right)^2 \ dx_k \\
&= \int_{\mc{X}} \left[ \int_{\mc{X}^{K-1}} \left( \sum_{d=1}^D z_d f^d + c \right)  \ d\bm{x}_{\mc{K} \backslash k} 
-\int_{\mc{X}^K} \left( \sum_{d=1}^D z_d f^d + c \right) \ d\bm{x} \right]^2 \ dx_k \\
& = \int_{\mc{X}} \left[  \int_{\mc{X}^{K-1}} \left( \sum_{d=1}^D z_d f^d + c \right) \ d\bm{x}_{\mc{K} \backslash k} \right]^2 \ dx_k  + \int_{\mc{X}} \left[ \int_{\mc{X}^K} \left( \sum_{d=1}^D z_d f^d + c \right) \ d\bm{x} \right]^2 \ dx_k \\ 
& - 2 \int_{\mc{X}} \left[ \int_{\mc{X}^{K-1}} \left( \sum_{d=1}^D z_d f^d + c \right) \ d\bm{x}_{\mc{K} \backslash k} \cdot \int_{\mc{X}^K} \left( \sum_{d=1}^D f^d + c \right) \ d\bm{x} \right] \ dx_k \\
& = \int_{\mc{X}}  \left( \int_{\mc{X}^{K-1}} \sum_{d=1}^D f^d \ d\bm{x}_{\mc{K} \backslash k} \right)^2 \ dx_k  + 2c \int_{\mc{X}} \left[ \int_{\mc{X}^{K-1}} \sum_{d=1}^D z_d f^d \ d\bm{x}_{\mc{K} \backslash k} \right] \ dx_k +  \int_{\mc{X}} c^2 \ dx_k \\
& - 2 \int_{\mc{X}} \left( \int_{\mc{X}^{K-1}} \sum_{d=1}^D z_d f^d)  \ d\bm{x}_{\mc{K} \backslash k} \cdot \int_{\mc{X}^K} \sum_{d=1}^D z_d f^d \ d\bm{x} \right) \ dx_k \\
& - 2c \int_{\mc{X}} \left( \int_{\mc{X}^{K-1}} \sum_{d=1}^D z_d f^d \ d\bm{x}_{\mc{K} \backslash k} \right) \ dx_k - 2c \int_{\mc{X}} \left( \int_{\mc{X}^K} \sum_{d=1}^D z_d f^d \ d\bm{x} \right) \ dx_k - 2 \int_{\mc{X}} c^2 \ dx_k \\
& + \int_{\mc{X}} \left( \int_{\mc{X}^K} \sum_{d=1}^D z_d f^d \ d\bm{x} \right)^2 \ dx_k + 2c \int_{\mc{X}} \left( \int_{\mc{X}^K} \sum_{d=1}^D z_d f^d \ d\bm{x} \right) \ dx_k + \int_{\mc{X}} c^2 \ dx_k,
\end{align*}
where we note that all terms involving $c$ vanish and we obtain
\begin{align*}
\sigma_k^2 &= \underbrace{\int_{\mc{X}} \left( \int_{\mc{X}^{K-1}} \sum_{d=1}^D z_d f^d \ d\bm{x}_{\mc{K} \backslash k} \right)^2 \ dx_k}_{\Gamma^{(1)}}  - 2 \underbrace{\int_{\mc{X}} \left( \int_{\mc{X}^{K-1}} \sum_{d=1}^D z_d f^d \ d\bm{x}_{\mc{K} \backslash k} \cdot \int_{\mc{X}^K} \sum_{d=1}^D z_d f^d \ d\bm{x} \right) \ dx_k}_{\Gamma^{(2)}} \\
& + \underbrace{\int_{\mc{X}} \left( \int_{\mc{X}^K} \sum_{d=1}^D z_d f^d \ d\bm{x} \right)^2 \ dx_k}_{\Gamma^{(3)}}, \rs{ with} \\
\Gamma^{(1)} &= \int_{\mc{X}} \left( \int_{\mc{X}^{K-1}} \sum_{d=1}^D z_d f^d \ d\bm{x}_{\mc{K} \backslash k}\right)^2 \ dx_k 
 = \int_{\mc{X}} \left( \sum_{i=1}^D \sum_{j=1}^D z_i z_j \int_{\mc{X}^{K-1}} f^i \ d\bm{x}_{\mc{K} \backslash k} \cdot \int_{\mc{X}^{K-1}} f^j \ d\bm{x}_{\mc{K} \backslash k} \right) \ dx_k \\
& = \sum_{i=1}^D \sum_{j=1}^D z_i z_j \int_{\mc{X}} \left( \int_{\mc{X}^{K-1}} f^i \ d\bm{x}_{\mc{K} \backslash k} \cdot \int_{\mc{X}^{K-1}} f^j \ d\bm{x}_{\mc{K} \backslash k} \right) \ dx_k,
\end{align*}
which can be written as $\Gamma^{(1)}= \bm{z}^T \bm{A}^{(1)}(k) \bm{z}$ with
$[\bm{A}^{(1)}(k)]_{ij} := \int_{\mc{X}} \left( \int_{\mc{X}^{K-1}} f^i \ d\bm{x}_{\mc{K} \backslash k} \cdot \int_{\mc{X}^{K-1}} f^j \ d\bm{x}_{\mc{K} \backslash k} \right) \ dx_k$.
For the remaining terms we compute
\begin{align*}
\Gamma^{(2)} & = \int_{\mc{X}} \left( \int_{\mc{X}^{K-1}} \sum_{d=1}^D z_d f^d \ d\bm{x}_{\mc{K} \backslash k} \cdot \int_{\mc{X}^K} \sum_{d=1}^D z_d f^d \ d\bm{x} \right) \ dx_k 
 = \int_{\mc{X}} \left( \int_{\mc{X}^{K-1}} \sum_{d=1}^D z_d f^d \ d\bm{x}_{\mc{K} \backslash k} \cdot \bm{b}^{(2),T}\bm{z} \right) \ dx_k \\
&= \int_{\mc{X}} \left( \int_{\mc{X}^{K-1}} \sum_{d=1}^D z_d f^d \ d\bm{x}_{\mc{K} \backslash k} \right) \ dx_k \cdot \bm{b}^{(2),T}\bm{z} 
 = \int_{\mc{X}^{K}} \sum_{d=1}^D z_d f^d \ d\bm{x} \cdot \bm{b}^{(2),T}\bm{z} = \bm{z}^T \bm{b}^{(2)} \bm{b}^{(2),T} \bm{z}
\end{align*}
\begin{align*}
\Gamma^{(3)} & = \int_{\mc{X}} \left( \int_{\mc{X}^K} \sum_{d=1}^D z_d f^d \ d\bm{x} \right)^2 \ dx_k 
 = \int_{\mc{X}} \left[ \sum_{d=1}^D z_d \int_{\mc{X}^{K}} f^d \ d\bm{x} \cdot \sum_{d=1}^D z_d \int_{\mc{X}^{K}} f^d \ d\bm{x} \right) \ dx_k 
 =  \bm{z}^T \bm{b}^{(2)} \bm{b}^{(2),T} \bm{z}
\end{align*}
and obtain the desired result $\sigma_k^2 = \bm{z}^T \left[ \bm{A}^{(1)}(k) - 2\bm{b}^{(2)} \bm{b}^{(2),T} + \bm{b}^{(2)} \bm{b}^{(2),T} \right] \bm{z} = \bm{z}^T \underbrace{\left[ \bm{A}^{(1)}(k) - \bm{b}^{(2)} \bm{b}^{(2),T} \right]}_{\bm{A}_k} \bm{z}$.
\vspace*{4pt}
\end{figure*}

\end{appendices}


\end{document}